\renewcommand{\algocf@captiontext}[2]{#1\algocf@typo. \AlCapFnt{}#2} 
\def\@algocf@capt@plain{top}
\renewcommand{\algocf@makecaption}[2]{%
  \addtolength{\hsize}{\algomargin}%
  \sbox\@tempboxa{\algocf@captiontext{#1}{#2}}%
  \ifdim\wd\@tempboxa >\hsize
    \hskip .5\algomargin%
    \parbox[t]{\hsize}{\algocf@captiontext{#1}{#2}}
  \else%
    \global\@minipagefalse%
    \hbox to\hsize{\box\@tempboxa}
  \fi%
  \addtolength{\hsize}{-\algomargin}%
}
\newtheorem{theorem}{Theorem}
\newcommand{\taux}{T_{\text{aux}}}
\newcommand{\E}{\mathrm{E}}
\newcommand{\var}{\mathrm{var}}
\newcommand{\avar}{\mathrm{avar}}
\newcommand{\pr}{\mathrm{pr}}
\newcommand{\Real}{\mathbb{R}}
\newcommand{\T}{^{ \mathrm{\scriptscriptstyle T} }} 
\newcommand{\dnorm}{{N}}
\newcommand{\sumdot}{\text{\tiny$\bullet$}}
\newcommand{\nid}{N_{i\sumdot}}
\newcommand{\ndj}{N_{\sumdot j}}
\newcommand{\yij}{Y_{ij}}
\newcommand{\zij}{Z_{ij}}
\newcommand{\ai}{a_i}
\newcommand{\bj}{b_j}
\newcommand{\giv}{\mid}
\newcommand{\wh}{\hat} 
\newcommand{\rd}{\,\mathrm{d}}  
\newcommand{\err}{\varepsilon}
\newcommand{\sign}{\mathrm{sign}}
\newcommand{\ssa}{\sigma^2_A}
\newcommand{\ssb}{\sigma^2_B}
\newcommand{\siga}{\sigma_A}
\newcommand{\sigb}{\sigma_B}
\newcommand{\obset}{\mathcal{S}} 
\newcommand{\phm}{\phantom{-}} 
\newcommand{\Sooo}{\textsc{Bal-Nul-Hi}}
\newcommand{\Soot}{\textsc{Bal-Nul-Lo}}
\newcommand{\Soto}{\textsc{Bal-Lin-Hi}}
\newcommand{\Sott}{\textsc{Bal-Lin-Lo}}
\newcommand{\Stoo}{\textsc{Imb-Nul-Hi}}
\newcommand{\Stot}{\textsc{Imb-Nul-Lo}}
\newcommand{\Stto}{\textsc{Imb-Lin-Hi}}
\newcommand{\Sttt}{\textsc{Imb-Lin-Lo}}
\begin{document}

\title{Consistent and Scalable Composite Likelihood Estimation of Probit Models with Crossed Random Effects}

\author{
R. Bellio\\
University of Udine
\and S. Ghosh\\
Stanford University
\and  A. B. Owen\\
Stanford University
\and C. Varin\\
Ca' Foscari University
}
\date{April 2025}

\maketitle

\begin{abstract}
Estimation of crossed random effects models commonly requires computational costs that grow faster than linearly in the sample size $N$, often as fast as $\Omega(N^{3/2})$, making them unsuitable for large data sets.  For non-Gaussian responses, integrating out the random effects to get a marginal likelihood brings significant challenges, especially for high dimensional integrals where the Laplace approximation might not be accurate.  
We develop a composite likelihood approach to probit models that replaces the crossed random effects model with some hierarchical models that require only one-dimensional integrals.  We show how to consistently estimate the crossed effects model parameters from the hierarchical model fits.
We find that the computation scales linearly in the sample size. We illustrate the method on about five million observations from Stitch Fix where the crossed effects formulation would require an integral of dimension larger than $700{,}000$.
\end{abstract}

\par\noindent {\bf Keywords:}
Adaptive Gauss-Hermite Quadrature; Binary Regression; E-Commerce; High-Dimensional Data

\section{Introduction}\label{sect:intro}
 
In this paper we develop a new composite likelihood approach to handling probit models with two crossed random effects.  Our initial motivation was to get point and interval parameter estimates at a computational cost that grows only linearly in the sample size, $N$. Standard algorithms for crossed random effects typically have superlinear cost, commonly $\Omega(N^{3/2})$, making them unsuitable for modern large data sets. A second issue is that the marginal likelihood in a crossed random effects model is an integral over $\Real^D$ where $D$ is large enough to make the integration problem challenging. Our scalable estimation method 
replaces this $D$-dimensional integral by $D$ integrals of dimension one. 

The common notation for mixed effects models combines fixed and random effects through a formula such as $X{\beta} + Z{b}$
using matrices $X$ and $Z$ of known predictors with unknown coefficient vectors ${\beta}$
and ${b}$ where ${b}$ happens to be random. This formulation is simple and elegant but it hides some extremely important
practical differences.  
As discussed above, the crossed setting leads to one high-dimensional integral while the hierarchical one uses many low-dimensional integrals.  For Gaussian responses we can use generalized least squares  on the response vector, without explicitly solving an integral.  
Even there, the crossed setting is harder. 
A hierarchical model has a block diagonal covariance matrix for the response vector resulting in a linear cost. For unbalanced crossed random effects, generalized least squares typically has a superlinear cost.

With the size $N$ of data sets growing rapidly
it is not possible to use estimation methods with a cost of $\Omega(N^{3/2})$.
Ideally the cost should be $O(N)$. 
We are motivated by electronic commerce
problems with large data sets.  A company might have customers $i=1,\dots,R$,
to which it sells items $j=1,\dots,C$. It would then be interested in modeling
how a response $Y_{ij}$ depends on some predictors ${x}_{ij}\in\Real^p$.
If they do not account for the fact that $Y_{ij}$ and $Y_{is}$  are correlated
due to a common customer $i$ or that $Y_{ij}$ and $Y_{rj}$ are correlated
due to a common item $j$, they will get an inefficient estimate. This flaw is less severe when $N$ is so large. What is very serious is that they will get unreliable standard errors for their estimates. 
In a setting where accounting for random effects is computationally impossible we expect that many users will simply ignore them, getting very na\"ive variance estimates and finding too many things significant.

A typical feature of data in our motivating applications 
is very sparse and imbalanced sampling. Only $N\ll RC$ of the possible $({x}_{ij},Y_{ij})$ values
are observed. There is generally no simple structure in the
pattern of which $(i,j)$ pairs are observed. It is common for the data to have very unequal sampling frequencies in each of the row and column variables.

The scaling problem is easiest to describe for generalised least squares solutions to linear mixed models, based on results from \cite{gao:owen:2020}. They note that the algorithm for generalised least squares involves solving a system of $R+C$ equations in $R+C$ unknowns which has cost $\Omega\{(R+C)^3\}$ in standard implementations. If $RC>N$ then $\max(R,C)>N^{1/2}$ and so $(R+C)^3>N^{3/2}$. The average number of observations per level is $N/(RC)$. This ratio is well below one in our motivating applications, and as long as it is $o(N^{1/3})$, the cost of standard algebra will be superlinear.

Standard Bayesian solutions run into a similar difficulty. For an intercept plus crossed random effects model, \cite{gao:2017} show that the Gibbs sampler takes $\Omega(N^{1/2})$ iterations that each have a cost proportional to $N$, for $\Omega(N^{3/2})$ cost overall. Several other Bayesian approaches considered there also had difficulty.

There has been recent progress on scalable algorithms
for crossed random effects problems improving both the frequentist and
Bayesian approaches.  For regression problems \cite{ghosh:2020} replace standard equation solving by a backfitting algorithm that has $O(N)$ cost per iteration and they give conditions under which the number of  iterations to convergence is $O(1)$ as $N\to\infty$.   See also \cite{ghandwani:2023} for regression with random slopes. 
\cite{papaspiliopoulos:2020} use a collapsed Gibbs sampler and then give conditions under which it has linear cost in $N$ for the intercept-only crossed random effects regression model. \cite{ghoshzhong:2021} do the same after weakening a stringent balance assumption. 

We consider a binary response requiring a generalized linear mixed effects model with crossed random effects that incur the high dimensional integration problem mentioned above. There are fewer scalable solutions for this problem. There is a frequentist approach by \cite{ghosh:2021}  and a Bayesian approach by \cite{Papaspiliopoulos:2023}. 
The all-row-column probit given here is simpler than
those two approaches and uses {much} weaker sampling
assumptions.

\cite{ghosh:2021} developed a penalized quasi-likelihood approach to logistic regression on fixed effects and two crossed random effects. That approach had iterations costing $O(N)$ each and empirically the number of iterations was $O(1)$. They used estimating equations from \cite{bres:clay:1993} which were based on work by \cite{schall:1991} to maximize  the marginal likelihood using a Laplace approximation.
The quantity being estimated is not exactly the maximum likelihood estimate.  It is a posterior mode corresponding to a not very informative prior, using some plugged-in weights, a quantity that goes back to \cite{stir:lair:ware:1984}. Penalized quasi-likelihood has a bias that can prevent it from being consistent.  Even with just the intercept and one random effect, it requires the number of levels of that effect as well as the number of observations at each of those levels to diverge to infinity in order to get a consistent estimate. With sample sizes $R=N^\rho$ and $C=N^\kappa$, for $\rho,\kappa\in(0,1)$ they require
$\max(\rho+2\kappa,2\rho+\kappa)<2$ and the observation probability for $(x_{ij},\yij)$ can only vary over
a narrow range.

\cite{Papaspiliopoulos:2023} extend the collapsed Gibbs sampler to obtain scalable Bayesian inference in generalized linear mixed models with crossed random effects using a reparameterization called local centering.  
They include an intercept and $K\ge2$ crossed random effects, while also discussing how fixed effects could be incorporated. 
Their approach requires a stringent balance assumption. For our data we would need $\nid =N/C$ for all rows $i=1,\dots,R$ and $\ndj=N/R$ for all $j=1,\dots,C$. With this condition, their cost per iteration is $O(N+R+C)$ in our notation. The total cost is this cost per iteration times a relaxation time.  For $K=2$ random effects and a discrete response like we have, they introduce an auxiliary relaxation time $\taux$ and show in Theorem 1 that the cost is $O[\max(N,R+C)\min\{2N/(R+C),\taux\}]$. Our problem has $\max(R,C)\ll N$ and then their cost is linear in $N$ if and only if $\taux=O(1)$. They get $\taux=O(1)$ when the observation pattern is uniformly distributed over all patterns with $\nid=N/C$ and $\ndj=N/R$. Both $R$ and $C$ much grow linearly with $N$.  They
then require $\nid=O(1)$ and $\ndj=O(1)$.

Our balance criteria are minimal.  Our main results require $\max(\max_i\nid,\max_j\ndj)/N=o(1)$. Both $R$ and $C$
can grow with $N$ at different rates. Our simulations, but not our theory, use a balance condition with nearly equal values among the $\nid$ and nearly equal values among the $\ndj$.

We study the probit model because its Gaussian latent variable is a good match for Gaussian random effects.
The probit and logit link functions are nearly
proportional outside of tail regions \citep[pp.~246--247]{agresti:2002},
and so they often give similar results.

Crossed models also differ from hierarchical models in that it is only recently that the maximum likelihood estimate for generalized linear mixed models has been shown to be consistent for crossed random effects.
This was accomplished by the subset argument of \cite{jiang:2013}
showing that the score equations have a root that is consistent
for the parameter.
One of the main open questions in generalized linear
mixed models, like the ones we consider here, is at what rate do the estimators of the model parameters converge?  Intuitively one might
expect $O(N^{-1/2})$ or $O\{\min(R,C)^{-1/2}\}$, in accordance with the results  valid for nested designs 
\citep{jiang2022usable}.
In some settings estimators for different parameters converge at
different rates.  See \cite{jiang:2013} for some discussion and \cite{lyu:siss:wels:2024} for recent work assuming balanced sampling. %

\section{The All-Row-Column Method}\label{sect:ARC}

\subsection{Preliminaries}\label{sect:preliminaries}

We consider two crossed random effects.
There is a vector ${a}\in\Real^R$ with elements $a_i$
and another vector ${b}\in\Real^C$ with elements $b_j$.
Conditionally on ${a}$ and ${b}$, the $Y_{ij}$ are independent with
\begin{equation}\label{eq:ygivab}
\pr( Y_{ij}=1\giv {a},{b} ) = \Phi({x}_{ij}\T{\beta}+a_i+b_j),
\end{equation}
where $\Phi(\cdot)$ is the standard normal cumulative distribution function. In these models, random effects are typically assumed to be uncorrelated,
${a}\sim\dnorm(0,\ssa I_R)$
independently of ${b}\sim\dnorm(0,\ssb I_C),$ where $I_n$ is the $n \times n$ identity matrix. See, for example, \citeauthor{McCullagh:Nelder:1989} (\citeyear{McCullagh:Nelder:1989}, page 444).
The probit model has a representation in
terms of latent variables $\err_{ij}\sim \dnorm(0,1)$, via
\begin{equation}\label{eq:ygivablatent}
Y_{ij}=1\{{x}_{ij}\T{\beta}+a_i+b_j+\err_{ij}>0\},
\end{equation}
where $1\{E\}$ is the indicator function of the event $E$.
The probability~\eqref{eq:ygivab} and the likelihoods derived from it are all conditional on the values of ${x}_{ij}$. 

We write $\obset\subset \{1,\dots,R\}\times\{1,\dots,C\}$
for the set of $(i,j)$ pairs where $({x}_{ij},Y_{ij})$ was observed. 
We also work conditionally on $\obset$.
In our motivating applications the pattern of observation/missingness
could be informative.  Addressing that issue would necessarily require information from
outside the data.  Furthermore, the scaling problem is still the subject of current research even in the
non-informative missingness setting. Therefore we consider estimation strategies without taking account of missingness.

The likelihood for ${\theta}=({\beta}\T, \ssa, \ssb)\T$ is a cumbersome integral of size $R+C$,
\begin{equation}\label{eq:fulllik}
L({\theta}) = \siga^{-R}\sigb^{-C} \int_{\Real^{R+C}}  L({\beta} \giv {a}, {b}) \prod_{i=1}^R  \varphi\Bigl(\frac{a_i}{\siga}\Bigr)  \prod_{j=1}^C   \varphi\Bigl(\frac{b_j}{\sigb} \Bigr) \rd {a} \rd {b},
\end{equation}
where $\varphi(\cdot)$ is the standard normal probability density function, and $L({\beta} \giv {a}, {b})$ is the conditional likelihood of ${\beta}$ given the random effects.
The conditional likelihood we need is
$$
L({\beta} \giv {a}, {b})=\prod_{(i,j) \in \obset} \Phi({x}_{ij}\T {\beta} + a_i + b_j)^{y_{ij}}  \Phi(-{x}_{ij}\T {\beta} - a_i - b_j)^{1-y_{ij}},
$$
and $L(\theta)$ is commonly called the marginal likelihood. 

The first-order Laplace approximation is a standard approach to approximate the integral in the marginal likelihood. The Laplace algorithm maximizes the logarithm of the integrand in the marginal likelihood \eqref{eq:fulllik} over ${a}\in\Real^R$ and ${b}\in\Real^C$ for fixed ${\theta}$. It then multiplies the maximum value of the integrand by $\det\{H^{-1/2}(\theta)\}$ where $H(\theta)\in\Real^{(R+C)\times(R+C)}$ is the Hessian of the log integrand with respect to ${a}$ and ${b}$ for fixed ${\theta}$. The result is an approximate marginal likelihood $\tilde L({\theta})$ that is optimized to get $\hat{{\theta}}$.
 See for example \cite{ogden2021error} or \cite{shun:mccu:1995}. 
 If the square root of the inverse Hessian is computed by standard methods then that alone has cost of $\Omega(N^{3/2})$ by the argument for generalised linear models discussed in \S\ref{sect:intro}.  Similarly, if the inner optimization is done using Newton steps, that will have a cost that is $\Omega(N^{3/2})$ per iteration. We return to this issue in \S\ref{sec:compcost} where a Laplace approximation shows superlinear cost that is $o(N^{3/2})$. 

Even if the Laplace approximation were computable for large $N$, it does not provide asymptotically valid results in our context because the size of the likelihood integral corresponds to the number of random effects $\Omega(N^{1/2})$ and therefore grows too fast with the sample size to ensure consistent results. See \cite{shun:mccu:1995}, \cite{ogden2021error} and \cite{Tang:2024} for a discussion of the conditions that must be satisfied to make the Laplace approximation reliable when the size of the integral grows with the sample size.

\subsection{Scalable composite likelihood inference}\label{sect:scalable_CL}
Our approach for a consistent and scalable estimator in high-dimensional probit models with crossed-random effects combines estimates for three misspecified probit models. Each of them is constructed through the omission of some random effects.  By combining~\eqref{eq:ygivab} and~\eqref{eq:ygivablatent} we
find that marginally
\begin{equation}\label{eq:noanob}
\pr( \yij = 1 ) = \Phi( {x}_{ij}\T{\gamma}),
\end{equation}
for ${\gamma} = {\beta}/(1+\ssa+\ssb)^{1/2}$.
The proposed method begins with estimation of  ${\gamma}$ from the na\"ive model (\ref{eq:noanob}) that omits both of the random effects through maximization of the likelihood 
\begin{align}\label{eq:lall}
L_{\text{all}}({\gamma})= \prod_{(i,j) \in \mathcal S} \Phi({x}_{ij}\T {\gamma})^{y_{ij}}\Phi(-{x}_{ij}\T {\gamma})^{1-y_{ij}}.
\end{align}
Maximization of this likelihood 
requires neither high dimensional integrals nor expensive algebra and we see it taking $O(N)$ computation in our examples. 
We then need estimates of
$\ssa$ and $\ssb$ to get the scale right. Our analysis of~\eqref{eq:noanob} will also account for within-row and within-column correlations among the $\yij$.
While $\sign({\gamma}_k)=\sign(\beta_k)$  $(k=1,\dots,p),$
confidence intervals for ${\gamma}_k$ based on model~\eqref{eq:noanob}
would be na\"ive if they did not account for the
dependence among the responses. 

Consider the reparameterization ${\psi}=({\gamma}\T, \tau^2_A, \tau^2_B)\T,$ where
\begin{equation}\label{eq:reparam}
{\gamma}=\frac{{\beta}}{(1+\ssa+\ssb)^{1/2}}, \quad \tau^2_A=\frac{\ssa}{1 + \ssb}, \quad \tau^2_B=\frac{\ssb}{1 + \ssa}.
\end{equation}
After dividing ${x}_{ij}\T{\beta}+a_i+b_j+\err_{ij}$ by $(1+\ssa)^{1/2}$ or by $(1+\ssb)^{1/2}$, model~\eqref{eq:ygivablatent} implies that
\begin{align}
\pr( \yij = 1\giv{a} ) &= \Phi( {x}_{ij}\T{\gamma}_A+u_i), 
\label{eq:nob}\\
\pr( \yij = 1\giv{b} ) &= \Phi( {x}_{ij}\T{\gamma}_B+v_j), \label{eq:noa}
\end{align}
where ${u}\sim\dnorm(0,\tau^2_AI_R)$ and
${v}\sim\dnorm(0,\tau^2_BI_C)$ for  ${\gamma}_A={\gamma}(1+\tau^2_A)^{1/2}$
and ${\gamma}_B={\gamma}(1+\tau^2_B)^{1/2}$. Given $\hat{{\gamma}}$, the maximizer of~\eqref{eq:lall}, we proceed with estimation of $\tau^2_A$ and $\tau^2_B$ from the two probit models (\ref{eq:nob}) and (\ref{eq:noa}) that each omit one of the random effects. Fitting 
models~\eqref{eq:nob} and~\eqref{eq:noa} involve simpler integrals than~\eqref{eq:ygivab} since the  latent variable representations of these models
\begin{equation*}
 Y_{ij}=1\{{x}_{ij}\T{\gamma}_A+u_i+\err_{ij}>0\}, \quad
 Y_{ij}=1\{{x}_{ij}\T{\gamma}_B+v_j+\err_{ij}>0\},
\end{equation*}
have hierarchical (not crossed) error structures. This is where we are able to replace the $(R+C)
$-dimensional integral \eqref{eq:fulllik} by $R+C$ univariate ones. 
Model~\eqref{eq:nob} is fitted by maximizing the row-wise likelihood
\begin{equation}\label{eq:rowlik}
L_{\text{row}}(\tau^2_A)=\tau_A^{-R}\prod_{i=1}^R \int_{\Real} L_{i\sumdot}(\hat{{\gamma}}_A  \giv u_i)  \varphi\Bigl(\frac{u_i}{\tau_A}\Bigr)  \rd u_i,
\end{equation}
where $L_{i\sumdot}(\hat{{\gamma}}_A \giv u_i)$ is the conditional likelihood of $\hat{{\gamma}}_A=\hat{{\gamma}}(1+\tau^2_A)^{1/2}$ given $u_i,$
\begin{align}\label{eq:rowilikelihood}
L_{i\sumdot}(\hat{{\gamma}}_A  \giv u_i)=\prod_{j \giv i}  \Phi({x}_{ij}\T  \hat{{\gamma}}_A  +u_i)^{y_{ij}}    \Phi(-{x}_{ij}\T  \hat{{\gamma}}_A  -u_i)^{1-y_{ij}},
\end{align}
where $j\giv i={\{j: (i,j) \in \mathcal S\}}$ is the set of indices $j$ such that $({x}_{ij},Y_{ij})$ is observed. The row-wise likelihood~\eqref{eq:rowlik} is a product of $R$ one-dimensional integrals and  
is a function of $\tau^2_A$ only, because we fix ${\gamma}$ at the estimate $\hat{{\gamma}}$ obtained from the maximization of the all likelihood at the previous step. Rows with a single observation do not contribute to estimation of $\tau^2_A$ because 
$\pr(Y_{ij}=1)  = \Phi({x}_{ij}\T  \hat{{\gamma}}),$ which does not depend on $\tau^2_A.$ 
Reversing the roles of the rows and columns, we get a column-wise likelihood $L_{\text{col}}$ which we maximize to get an estimate $\hat\tau^2_B$ of $\tau^2_B$.

Finally, 
we invert the equations in~\eqref{eq:reparam} to get
$$
\hat{{\beta}} = \hat{{\gamma}} (1 + \hat\sigma_A^2 + \hat \sigma_B^2)^{1/2}, \quad \hat \sigma_A^2=\frac{\hat \tau_A^2(1 + \hat \tau^2_B)}{1 -  \hat \tau^2_A \hat \tau^2_B}, \quad
 \hat \sigma_B^2=\frac{\hat \tau_B^2(1 + \hat \tau^2_A)}{1 -  \hat \tau^2_A \hat \tau^2_B}.
$$
Note that by the definitions in \eqref{eq:reparam}, $\tau^2_A\tau^2_B<1$.
In our computations, we never encountered a setting where $\hat\tau^2_A\hat\tau^2_B\geq 1$ and so our estimates  $\hat \sigma^2_A$ and $\hat\sigma^2_B$ were never negative.

We call our method `all-row-column'.
The name comes from model~\eqref{eq:noanob} that uses
all the data at once, model~\eqref{eq:nob} that combines likelihood
contributions from within each row and model~\eqref{eq:noa} that
combines likelihood contributions from within each column. Figure \ref{fig:ARC} illustrates those models for a data set of $N=39$ observations in $R=10$ rows and $C=10$ columns. In each of three misspecified probit models, points in different boxes are assumed to be independent. The left panel of Figure \ref{fig:ARC} illustrates the `all' model with independent data. The next two illustrate the `row' and `column' hierarchical models. We combine fits from these three misspecified models to get consistent scalable formulas despite the dependencies involved. This approach is a new form of composite likelihood \citep{lindsay:1988,varin:2011}. An earlier version of composite likelihood that was applied to crossed random effects is discussed in \cite{bellio:2005}. They considered a standard pairwise likelihood which is however not scalable and thus inappropriate for our problem. 
Instead, our all-row-column method has $O(N)$ cost per iteration and we  get $O(N)$ cost empirically.

\begin{figure} 
\centering
\includegraphics[width=.9\hsize]{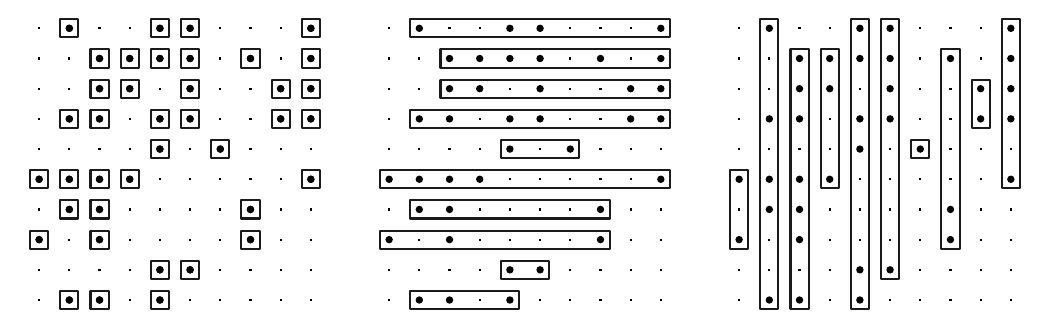}
\caption{\label{fig:ARC} 
The all-row-column method for $N=39$ observations (marked $\bullet$) in $R=10$ rows and $C=10$ columns.
}
\end{figure}

Computation of the row- and column-wise likelihoods require one to approximate up to $R$ and $C$ univariate integrals of a standard hierarchical probit model. Since $R$ and $C$ are large in the applications that motivate this work, then accurate approximation of the one-dimensional integrals is crucial. Otherwise, the accumulation of approximation errors could induce serious biases in the estimation of $\tau^2_A$ and $\tau^2_B$. Our intensive numerical studies indicated that accurate approximation of the integrals is obtained by the well-established adaptive Gauss-Hermite quadrature \citep[\emph{e.g.},][]{liu1994note}  with a suitable choice of the number of quadrature nodes. 

In the next Theorem, we prove the weak consistency of the maximizer $\hat{{\gamma}}$ of the `all' likelihood~\eqref{eq:lall} to the value ${\gamma}$ of equation~\eqref{eq:reparam}.  First we introduce some notation.
Let $\zij=1$ if $({x}_{ij},\yij)$ is observed
and take $\zij=0$ otherwise.
The number of observations in row $i$ is $\nid = \sum_{j=1}^C\zij$ and similarly column $j$ has $\ndj=\sum_{i=1}^R\zij$ observations in it. Let $\epsilon_R=\max_i\nid/N$ and
$\epsilon_C =\max_j\ndj/N$.  
We assume
that $\max(\epsilon_R,\epsilon_C)\to0$
as $N\to\infty$.
We take $\zij$ to be deterministic 
with at most one
observation for any $(i,j)$ pair.  In our motivating applications, we would seldom if ever have multiple observations for one $(i,j)$ pair.  Even then, one might only use the most recent of those observations.  The ${x}_{ij}$
are not dependent on the $\zij$.
Finally, $\yij$ are sampled from their probit distribution
conditionally on~${x}_{ij}$.

\begin{theorem}
Let $\yij\in\{0,1\}$ follow the crossed random effects probit model~\eqref{eq:ygivab}  
with true value ${\gamma}_0$ for the parameter 
${\gamma}={\beta}/(1+\ssa+\ssb)^{1/2}.$ 
Let the number of observations $N\to\infty$
while $\max(\epsilon_R,\epsilon_C)\to0$.
Let ${x}_{ij}\in\Real^p$ satisfy
\begin{enumerate}
\item $\Vert{x}_{ij}\Vert \le B<\infty;$
\item $N^{-1}\sum_{ij}\zij{x}_{ij}{x}_{ij}\T\to V\in\Real^{p\times p}, $ where $V$ is positive definite;
 \item there is no nonzero vector ${v} \in \Real^p$ such that ${v}\T {x}_{ij}\geq 0$ for all $(i, j)$ with $Z_{ij}=1$ and $y_{ij}=1$ and  ${v}\T {x}_{ij}\leq 0$ for all $(i, j)$ with $Z_{ij}=1$ and $y_{ij}=0$.
\end{enumerate}
Let $\hat{{\gamma}}\in\Real^p$ be any maximizer of~\eqref{eq:lall}.
Then for any $\epsilon>0$
$$
\pr( \Vert\hat{{\gamma}}-{\gamma}_0\Vert>\epsilon)\to0, \;
\text{as $N\to\infty$}.
$$
\end{theorem}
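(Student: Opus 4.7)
The plan is to follow a standard M-estimation consistency argument, treating the "all" log-likelihood as a misspecified criterion whose score is nevertheless unbiased at $\gamma_0$. Because~\eqref{eq:noanob} gives $\Pr(\yij=1\giv\bm{x}_{ij})=\Phi(\bm{x}_{ij}\T\gamma_0)$ marginally, the expectation of each summand of $\ell_{\mathrm{all}}$ is correctly modelled, and the problem reduces to two ingredients: identification of $\gamma_0$ as the unique maximizer of the population criterion and concentration of the sample criterion around it. For identification, set $f_{ij}(\gamma)=\yij\log\Phi(\bm{x}_{ij}\T\gamma)+(1-\yij)\log\Phi(-\bm{x}_{ij}\T\gamma)$. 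Then $\E[f_{ij}(\gamma)]=M(\gamma):=\E_{\bm{x}}\{\Phi(\bm{x}\T\gamma_0)\log\Phi(\bm{x}\T\gamma)+\Phi(-\bm{x}\T\gamma_0)\log\Phi(-\bm{x}\T\gamma)\}$ is the same for every $(i,j)$, so $\E[N^{-1}\ell_{\mathrm{all}}(\gamma)]=M(\gamma)$. Gibbs' inequality gives $M(\gamma)\le M(\gamma_0)$ with equality iff $\bm{x}\T(\gamma-\gamma_0)=0$ almost surely, which condition~2 excludes unless $\gamma=\gamma_0$.

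The main obstacle is obtaining pointwise (and then uniform) convergence of $N^{-1}\ell_{\mathrm{all}}$ in the presence of dependence among the $\yij$. Condition~1 makes $f_{ij}(\gamma)$ uniformly bounded on any compact $K\subset\Real^p$, so every pairwise covariance is bounded. The crossed structure then lets me split
$$
\var\Bigl(N^{-1}\sum_{(i,j)\in\obset}f_{ij}(\gamma)\Bigr)=N^{-2}\sum_{(i,j),(r,s)\in\obset}\cov\bigl(f_{ij}(\gamma),f_{rs}(\gamma)\bigr)
$$
into four regimes: when $i\ne r$ and $j\ne s$ the two observations share no random effect, so the covariance vanishes; diagonal terms contribute $O(N)$; pairs with $i=r,\ j\ne s$ number $\sum_i\nid(\nid-1)\le N\max_i\nid=\epsilon_R N^2$; and pairs with $j=s,\ i\ne r$ contribute at most $\epsilon_C N^2$. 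Hence the variance is $O(N^{-1}+\epsilon_R+\epsilon_C)\to 0$, and Chebyshev yields $N^{-1}\ell_{\mathrm{all}}(\gamma)\xrightarrow{p}M(\gamma)$ for each fixed $\gamma\in K$.

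To promote pointwise to uniform convergence on compacta I would invoke the convexity lemma: $\ell_{\mathrm{all}}$ is concave in $\gamma$, so pointwise convergence to the (automatically concave) limit $M$ is uniform on every compact subset of the interior of the domain. Condition~3 rules out complete separation, which together with condition~2 implies that $\ell_{\mathrm{all}}$ is coercive with probability approaching one, so any maximizer $\hat\gamma$ lies in a compact neighborhood of $\gamma_0$ eventually. The usual argmax continuous-mapping argument, combined with the well-separated maximum of $M$ at $\gamma_0$, then gives $\hat\gamma\to\gamma_0$ in probability. As in \cite{luml:hamb:2003}, the delicate step is the variance split above: the vanishing fractions $\epsilon_R$ and $\epsilon_C$ are precisely what make the dependence from shared random effects asymptotically negligible.
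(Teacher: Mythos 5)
Your proposal is correct and follows essentially the same route as the paper: the same covariance decomposition over shared-row, shared-column and diagonal pairs bounded by a constant times $\epsilon_R+\epsilon_C$, followed by the concavity argument (the paper cites Andersen--Gill, Appendix II, which is your ``convexity lemma'') to pass from pointwise convergence of $N^{-1}\ell_{\mathrm{all}}$ to consistency of the argmax. The only cosmetic difference is in the identification step, where you use Gibbs' inequality plus positive definiteness of $V$ while the paper computes the population Hessian $-\E\bigl\{\varphi(\bm{x}\T\gamma_0)^2\,\bm{x}\bm{x}\T/(\Phi(\bm{x}\T\gamma_0)\Phi(-\bm{x}\T\gamma_0))\bigr\}$ and bounds it below by $aV$; both yield strict concavity and uniqueness of the maximizer.
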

The proof  is reported in Section S1 of the Supplementary Material and we obtained it by adapting the proof strategy of \cite{luml:hamb:2003} to our setting. The balance condition that no single row or column have a non-vanishing fraction of data is much weaker than the norm for the crossed random effects literature.

Now we consider consistent estimation of $\tau^2_A$ and $\tau^2_B$ from the row and column likelihoods, respectively.  Those provide consistent estimates of $\ssa$ and $\ssb$ with which one can then adjust the consistent estimate of ${\gamma}$ to get a consistent estimate of ${\beta}$. 

\begin{theorem}\label{thm:cramer}
 Under the assumptions of Theorem 1, there is a root of the row likelihood equation that is a consistent estimator for $\tau^2_A$ and a root of the column likelihood equation that is a consistent estimator for $\tau^2_B.$
 \end{theorem}

Our proof of this theorem is also reported in \S S1 of the Supplementary Material. The proof uses the subset argument of \cite{jiang:2013} to show Cramer consistency of the maximum row likelihood  estimator of $\tau^2_A$
and equivalently of the maximum column likelihood estimator of $\tau^2_B$.
  
\subsection{Robust sandwich variance}\label{sect:standard_errors}

After a customary Taylor approximation, the  variance of $\hat{{\theta}}$ is   $ \var(\hat{{\theta}})  \doteq D\,  \var(\hat{{\psi}})  \, D\T,$ with $D$ the Jacobian matrix of the reparameterization from ${\theta}$ to ${\psi}$.  Letting  $\psi_0$ denote the true value of $\psi$ and 
${u}_{\text{arc}}({\psi})$ denote the score vector of the all-row-column estimator  constructed by stacking the scores of the three misspecified likelihoods, the asymptotic variance of $\hat{{\psi}}$ is
\begin{equation}\label{eq:sandwich}
\avar(\hat{{\psi}}) =\mathcal{I}_{\text{arc}}^{-1}({\psi}_0) V_{\text{arc}}({\psi}_0) \mathcal{I}_{\text{arc}}^{-1}({\psi}_0),
\end{equation}
where $\mathcal{I}_{\text{arc}}(\psi)=-\E\{\partial {u}_{\text{arc}}(\psi)/ \partial \psi \}$ and $ V_{\text{arc}}(\psi)=\var\{u_{\text{arc}}(\psi)\}$ are the expected information and the score variance for the all-row-column estimator. These two matrices are not equal because the second Bartlett identity does not hold for the misspecified likelihoods that constitute the all-row-column method. While estimation of the `bread' matrix $\mathcal{I}_{\text{arc}}$ of the sandwich is not problematic, direct computation of the `filling' matrix $ V_{\text{arc}}$ is not feasible in our large-scale setup because it requires us to approximate a large number of multidimensional integrals with a cost that does not meet our $O(N)$ constraint.

Since all-row-column estimates require $O(N)$ computations, we can estimate the variance of $\hat{\theta}$ with a parametric bootstrap. However, it is preferable to evaluate the estimation uncertainty without assuming the correctness of the fitted model, for example using the nonparametric pigeonhole bootstrap described in \cite{owen:2007}.
In that approach, the rows in the data set are resampled
independently of the columns.  So if a row is included twice
and a column is included three times, the corresponding element
is included six times.  The resulting bootstrap variance
for a mean (such as one in a score equation)
over-estimates the random effects variance
by an asymptotically negligible amount.  It does not require
homoscedasticity of either the random effects or the errors.

Now we describe a convenient approach that we have developed to estimate the variance of 
$\hat{{\beta}}$ in $O(N)$ operations without the need of resampling and repeated fitting as in the bootstraps mentioned above. The partitioned expected all-row-column information matrix is
\begin{align*}
\mathcal{I}_{\text{arc}}({\psi}) &=-
\begin{pmatrix}
\E(\partial^2 \ell_{\text{all}}  / \partial {\gamma} \partial {\gamma}\T)  & 0 & 0 \\[0.5em]
\E(\partial^2 \ell_{\text{row}}  / \partial {\gamma} \partial \tau^2_A) & \E(\partial^2  \ell_{\text{row}}  / \partial \tau^2_A \partial \tau^2_A) & 0 \\[0.5em]
\E(\partial^2 \ell_{\text{col}} / \partial {\gamma} \partial \tau^2_B)  & 0 & \E(\partial^2 \ell_{\text{col}}  / \partial \tau^2_B \partial \tau^2_B) \\
\end{pmatrix},
\end{align*}
where $\ell_{\text{all}}=\log L_{\text{all}}(\gamma)$ from~\eqref{eq:lall}, $\ell_{\text{row}}=\log L_{\text{row}}(\tau^2_A)$ from~\eqref{eq:rowlik} and $\ell_{\text{col}}=\log L_{\text{col}}(\tau^2_B)$. 
Since $\mathcal{I}_{\text{arc}}({\psi})$ %
is triangular, then the asymptotic variance for 
$\hat{{\gamma}}$ is
\begin{equation}\label{eq:sandwich_gamma}
\avar(\hat{{\gamma}})=\mathcal{I}_{\text{all}}^{-1}({\gamma}_0) V_{\text{all}}({\gamma}_0) \mathcal{I}_{\text{all}}^{-1}({\gamma}_0),
\end{equation}
where $\mathcal{I}_{\text{all}}({{\gamma}})=-\E(\partial^2 \ell_{\text{all}}  / \partial {\gamma} \partial {\gamma}\T)$  and $V_{\text{all}}({{\gamma}})=\var(\partial \ell_{\text{all}}  / \partial {\gamma})$ are the Fisher expected information and the score variance of the all likelihood.  The asymptotic variance (\ref{eq:sandwich_gamma}) for $\hat{{\gamma}}$ is thus the same as that of the estimator that maximizes the all likelihood when the nuisance parameters $\tau^2_A$ and $\tau^2_B$ are known. The robust sandwich estimator of the variance of $\hat{{\gamma}}$ is obtained by replacing $\mathcal{I}_{\text{all}}({\gamma}_0)$ and $V_{\text{all}}({\gamma}_0)$ with some estimators that are consistent and robust to misspecification. The expected information is naturally estimated with the observed information,
\begin{align*}
J_{\text{all}}(\hat{{\gamma}})=\sum_{ij}Z_{ij}\varphi(\hat\eta_{ij})
\biggl\{y_{ij} \frac{\varphi(\hat\eta_{ij})+\hat\eta_{ij}\,\Phi(\hat\eta_{ij})}{\Phi(\hat\eta_{ij})^2}
+ 
(1-y_{ij}) \frac{\varphi(\hat\eta_{ij})-\hat\eta_{ij}\,\Phi(-\hat\eta_{ij})}{\Phi(-\hat\eta_{ij})^2}
\biggr\} {x}_{ij}{x}_{ij}\T,
\end{align*}
where $\hat\eta_{ij}={x}_{ij}\T \hat{{\gamma}}.$  Estimation of $V_{\text{all}}({\gamma}_0)$ is more involved. This matrix can be decomposed into the sum of three terms, 
\begin{align*}
V_{\text{all}}({\gamma}_0)&=\sum_{ij} \sum_{rs} Z_{ij} Z_{rs} \E\bigl\{{u}_{ij}({\gamma}_0) {u}_{rs}\T({\gamma}_0)\bigr\} \\
&=\sum_{ijs} Z_{ij} Z_{is} \E\bigl\{{u}_{ij}({\gamma}_0) {u}_{is}\T({\gamma}_0)\bigr\} +
\sum_{ijr} Z_{ij} Z_{rj} \E\bigl\{{u}_{ij}({\gamma}_0) {u}_{rj}\T({\gamma}_0)\bigr\}  \\&-  
\sum_{ij} Z_{ij}  \E\bigl\{{u}_{ij}({\gamma}_0) {u}_{ij}\T({\gamma}_0)\bigr\},
\end{align*}
where ${u}_{ij}({\gamma})$ is the score for a single observation $Y_{ij},$
$$
{u}_{ij}({\gamma})=\frac{\varphi({x}_{ij}\T {\gamma}) \bigl\{y_{ij}-\Phi({x}_{ij}\T {\gamma}) \bigr\} {x}_{ij}}{\Phi({x}_{ij}\T {\gamma})\Phi(-{x}_{ij}\T {\gamma})}.
$$
The corresponding estimator of $V_{\text{all}}({\gamma}_0)$ is
$$
\wh{V}_{\text{all}}(\hat{{\gamma}})=\wh{V}_A(\hat{{\gamma}})+\wh{V}_B(\hat{{\gamma}})-\wh{V}_{A \cap B}(\hat{{\gamma}}),
$$
whose components are computed grouping the individual scores with respect to each random effect and their interaction,
$$
\wh{V}_A(\hat{{\gamma}})= \sum_i {u}_{i \sumdot}(\hat{{\gamma}})    {u}_{i\sumdot}\T(\hat{{\gamma}}),\quad
\wh{V}_B(\hat{{\gamma}})= \sum_j  {u}_{\sumdot j}(\hat{{\gamma}})  {u}_{\sumdot j}\T(\hat{{\gamma}}), \quad
\wh{V}_{A \cap B}(\hat{{\gamma}}) =   \sum_{i j} Z_{ij} {u}_{ij}(\hat{{\gamma}})  {u}_{ij}\T(\hat{{\gamma}}),
$$
where ${u}_{i\sumdot}({\gamma})=\sum_{j}  Z_{ij} {u}_{ij}({\gamma})$ and ${u}_{\sumdot j}({\gamma})=\sum_{i} Z_{ij}  {u}_{ij}({\gamma}).$ 
Estimators of the form $\wh{V}_{\text{all}}(\hat{{\gamma}})$ are used in statistical modeling of data clustered within multiple levels in medical applications \citep{miglioretti:2004} and in economics \citep{cameron:2011} where they are known as two-way cluster-robust sandwich estimators.

Finally, we approximate the variance of $\hat{{\beta}}$ by plugging in the estimates $\wh\sigma_A^2$ and $\wh\sigma_B^2$,
\begin{equation}
\wh\var
(\hat{{\beta}}) =  (1+\wh\sigma_A^2+\wh \sigma_B^2) \, \wh{\var}(\hat{{\gamma}}) = (1+\wh\sigma_A^2+\wh \sigma_B^2)\, {J}_{\text{all}}^{-1}(\hat{{\gamma}})\wh{V}_{\text{all}}(\hat{{\gamma}}) {J}_{\text{all}}^{-1}(\hat{{\gamma}}). \label{eq:var-beta}
\end{equation}

A limitation of this approach is that it neglects the uncertainty in the estimation of the variance components: Although 
we do not expect
a substantial impact in high-dimensional applications, it could be possible to adjust the (\ref{eq:var-beta}) for the variability of the variance components through bootstrapping the row-wise and column-wise estimates $\hat\tau^2_A$ and $\hat\tau^2_B$.

\section{Simulations}\label{sect:simu}

In this section, we simulate from the probit model with crossed random effects (\ref{eq:ygivab}) and compare the performance of the all-row-column estimator with the traditional estimator obtained by maximizing the first-order Laplace approximation of the likelihood. A further method considered is an `infeasible oracle estimator'  that uses the unknown true values of $\ssa$ and $\ssb$ to estimate the regression parameters as $\hat{{\beta}}_{\mathrm{oracle}} = \hat{{\gamma}}(1+\ssa+\ssb)^{1/2}.$ The all-row-column method instead corrects $\hat{{\gamma}}$ using estimates of the variance components. All methods were implemented in the R language. The package \texttt{TMB}  \citep{
kristensen:2016} was used for the Laplace approximation, with the \texttt{nlminb} optimization function employed for its maximization. We did not use the popular \texttt{glmer} function from the {R} package \texttt{lme4} \citep{Bates:2015} because the current version of \texttt{TMB} is substantially more computationally efficient and thus allowed us to compare our method to the first-order Laplace approximation at larger dimensions than it would otherwise be  possible with \texttt{glmer}. 
The row- and column-wise likelihoods were coded in C++ and integrated in {R} with \texttt{Rcpp} \citep{Rcpp}, and optimized by Brent's method as implemented in the \texttt{optimise}
function.  

\subsection{Simulation settings}
We considered eight different settings defined by 
combining three binary factors.  The first factor is 
whether the simulation
is balanced (equal numbers of rows and columns)
or imbalanced (with very unequal numbers) 
like we typically see in applications.
The second factor is whether the regression model is null
apart from a nonzero intercept or has nonzero regression
coefficients.  The third factor is whether the
random effect variances are set at a high level
or at a low level. 
Given $R$ and $C$, the set $\obset$ is obtained by independent and identically distributed
Bernoulli sampling with probability $1.27\times N/(RC)$. The value $1.27$ is the largest one for which \cite{ghosh:2020} could prove that backfitting takes $O(1)$ iterations.
This sampling makes
the attained value of $N$ random but with a very
small coefficient of variation.

We denote by $R=N^\rho$ and $C=N^\kappa$ the number of rows and columns in the data, as a power of the total sample size $N$. The two levels of
the balance factor are thereafter termed `balanced' with   $\rho=\kappa=0.56$ and
 `imbalanced' with  $\rho=0.88$ and $\kappa=0.53$.
 The balanced case was used in \cite{ghosh:2021}, and the imbalanced case is similar to the Stitch Fix data in \S \ref{sect:application}.
  Because $\rho+\kappa>1$, the fraction of possible observations
  in the data is $N/(RC)=N^{1-\rho-\kappa}\to 0$ as $N\to\infty$
  providing asymptotic sparsity in both cases.
  While the first choice has $R/C$ constant,
  the second choice has $R/C\to\infty$ with $N$.
  We believe that this asymptote is a better description
  of our motivating problems than either a setting with
  $C$ fixed as $R\to\infty$ or the setting common
  in random matrix theory \citep{edel:rao:2005} where $R$ and $C$ diverge
  with $R/C$ approaching a constant value. Notice that when $\rho=0.88$ and $\kappa=0.53$
  the condition $\max(\rho+2\kappa,2\rho+\kappa)<2$ invoked for the estimator in \cite{ghosh:2021}
  does not hold.

We considered seven predictors generated from a multivariate zero-mean normal distribution with a covariance matrix $\Sigma$ corresponding to an autocorrelation process of order one, so that the entry $(k, l)$ of $\Sigma$ is $\phi^{|k-l|}$. We set $\phi=0.5$ in all the simulations. We always used the intercept $\beta_{0} = -1.2$ because in our applications $\pr( Y=1)<1/2$ is typical. For the predictor coefficients we considered two choices termed `null' with $\beta_{\ell} = 0$ and `linear' with  $\beta_{\ell} = -1.2 + 0.3 \ell$ $(\ell=1,\dots,7).$
The first setting is a null one where ${x}$ is not predictive at all, while the second setting has modestly important nonzero predictors
whose values are in linear progression. 

The  two choices for the variance component parameters are termed `high variance' with $\siga = 1$ and $\sigb = 1$ and `low variance' with $\siga = 0.5$ and $\sigb = 0.2$.
We chose the first setting to include variances higher than what is typically observed in applications. The second setting is closer to what we have seen in data such as that in  \S \ref{sect:application}. 
We represent the eight settings with
mnemonics as shown in Table~\ref{tab:settings}.
For example, \Stoo\  means row-column 
imbalance ($\rho=0.88, \kappa=0.53$), all
predictor coefficients are zero 
and the main effect variances are large ($\siga=1, \sigb=1$). 

\begin{table}
\centering
{
\begin{tabular}{lcclrr}
{Setting} & \multicolumn{2}{c}{{Sparsity}} & {Predictors} & \multicolumn{2}{c}{{Variances}} \\
& $\rho$ & $\kappa$ & & $\siga$ & $\sigb$ \\
\Sooo & 0.56 & 0.56 & all zero & 1.0 & 1.0\\
\Stoo & 0.88 & 0.53 & all zero &  1.0 & 1.0\\
\Soto & 0.56 & 0.56 & not all zero &  1.0 & 1.0\\
\Stto & 0.88 & 0.53  & not all zero &  1.0 & 1.0\\
\Soot & 0.56 & 0.56 & all zero &  0.5 & 0.2 \\
\Stot
& 0.88 & 0.53 & all zero &  0.5 & 0.2 \\
\Sott
& 0.56 & 0.56 & not all zero &  0.5 & 0.2 \\
\Sttt
& 0.88 & 0.53 & not all zero &  0.5 & 0.2 
\end{tabular}}
\label{tab:settings}
\caption{Summary of the eight simulation settings.}
\end{table}

For each of these eight settings, we considered 13 increasing sample sizes $N$ in the interval from $10^3$ to $10^6$ obtained by taking 13 equispaced values on the $\log_{10}$ scale. 
As we see next, the Laplace method had
a cost that grew superlinearly and to keep
costs reasonable we only used sample sizes
up to $10^5$ for that method.
For each of these 13 sample sizes and each of the eight settings, we simulated 1000 data sets.

As suggested by a referee, we experimented with different values for the number of quadrature nodes $k$ to approximate the univariate integrals of the row- and column-likelihoods; see also the recent work
\cite{Bilodeau:2024}. 
The value $k=1,$ which corresponds to the Laplace approximation, produced estimates of $\sigma_A$ and $\sigma_B$ affected by substantial downward bias even at large $N$. The bias disappeared by increasing $k$. With $k=5,$ we obtained results not affected by bias and essentially indistinguishable from those with $k=25$. Therefore, the results discussed in the rest of this section were calculated with $k=5$ nodes.

Graphs comparing the computational costs, the statistical properties and the scalability of the three estimation methods for all  eight settings are reported in the Supplementary Materials. 
To save space, we present graphs for only one of the settings, \Stoo, and just summarize the other settings. 
This chosen setting is a challenging one.
It is not surprising that imbalance and
large variances are challenging.  
The binary regression setting
is different from linear modeling where
estimation difficulty is unrelated to
predictor coefficient values. 
The main reason to highlight this setting is
that it illustrates an especially bad
outcome for the Laplace method estimate
of $\siga$.  Similar but less extreme
difficulties for the Laplace method's
estimates of $\siga$ appear in
setting \Stot.

\subsection{Computational cost}\label{sec:compcost}
Figure \ref{fig:times211} shows the average computation times for the three methods, obtained
on a 16-core 3.5 GHz AMD processor equipped with 128 GB of RAM. 
It also shows regression lines of
log cost versus $\log N$, marked
with the regression slopes.
The all-row-column and oracle method's slope are both very nearly $1$ as
expected.
The Laplace method slope is clearly
larger than $1$ and, as noted above, we curtailed the
sample sizes used for that case to
keep costs reasonable. If we extrapolate the Laplace cost to $N=10^8$, comparable to the Netflix data \citep{bennett2007netflix}, then the cost grows past 12.9 days while the all-row-column cost grows only to about 45 minutes. 
The computational cost of all-row-column can be further reduced with parallel computing by distributing the approximations of the R+C one-dimensional integrals across multiple cores.

\begin{figure}[!h]
  \centering
  \includegraphics[width=.9\hsize]{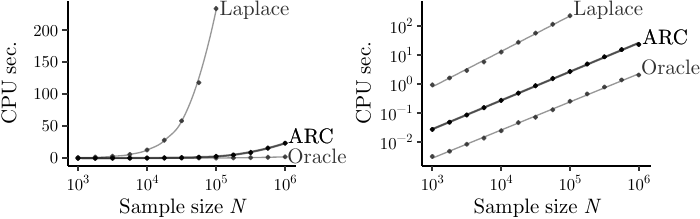}
 \caption{The left panel shows 
    time in seconds versus sample size $N$ for
    the \Stoo\ setting for all three methods:
    Laplace, all-row-column and the Oracle. The right panel is the same plot with times displayed on the $\log_{10}$ scale. 
   }
    \label{fig:times211}
\end{figure}

Similar estimated computational costs  were obtained for the other seven settings as %
summarized in Table~\ref{tab:timings} which shows estimated
computational cost rates for all eight
very close to $1$.
The oracle slope is 
consistently close to $1$ in imbalanced
settings and somewhat bigger than $1$ in balanced ones.
The Laplace slope is
consistently larger than $1$. 
It tends to be higher for balanced
settings although one of the imbalanced cases
also has a large slope.

\begin{table}
\centering
\begin{tabular}{lccc}
{Setting} &  {Oracle slope} &  {ARC slope} & {Laplace slope}\\
\Sooo & 1.06 & 1.01 & 1.30\\
\Soot & 1.05 & 1.01 & 1.24\\
\Soto & 1.06 & 1.00 & 1.29\\
\Sott & 1.07 & 0.99 & 1.26\\
\Stoo & 0.98 & 1.00 & 1.23\\
\Stot & 0.98 & 1.00 & 1.13\\
\Stto & 1.02 & 0.99 & 1.15\\
\Sttt & 0.98 & 1.00 & 1.15\\
\end{tabular}
\caption{
Computational cost for all eight settings.
Linear regression slopes for $\log(\text{CPU seconds})$ versus $\log(N).$
}
\label{tab:timings}
\end{table}

We have investigated the data behind the oracle slopes, but cannot yet explain the mild superlinearity we sometimes see.  The number of
 Fisher-scoring iterations used by the oracle method varies at small sample sizes but is consistently near seven at larger sample sizes where the superlinearity is more prominent.  We have seen some outliers in the computation times at small sample sizes but not at large sample sizes.  Replacing the means at different $N$ by medians does not materially change the slopes. We consider the amount of unexplained nonlinearity to be small but not negligible.  For example, when going from $N=10^3$ to $N=10^6$ a rate like $N^{1.06}$ yields a roughly 1500 fold cost increase instead of the expected 1000 fold increase. The anomaly is concentrated in the balanced simulations, but knowing this has not been enough to identify the cause.

\subsection{Regression coefficient estimation}

Next, we turn to estimation of the regression coefficients, treating the intercept differently from the others.
The intercept poses a challenge because it is somewhat confounded with the random effects.
For instance, if we replace
$\beta_0$ by $\beta_0+\lambda$ while
replacing $\ai$ by $\ai-\lambda$
then the $\yij$ are unchanged. Large $\lambda$
would change $\bar a = (1/R)\sum_{i=1}^R\ai$
by an implausible amount that should be
statistically detectable given
that $\ai\sim\dnorm(0,\ssa)$.
On the other hand $|\lambda|=O(\siga R^{-1/2})$ would be hard to detect
statistically.
The other regression parameters are
not similarly confounded with main
effects in our settings.
\cite{ghosh:2021} saw that a categorical predictor that is a function of just the row index $i$ or just the column index $j$ brings a similar confounding.

Because of the confounding described
above, we anticipate that the true mean square error 
rate for the intercept cannot be better than
$O\{\min(R,C)\}$ which is
$O(N^{-0.53})$ in our
imbalanced settings and $O(N^{-0.56})$ in our balanced settings.
For the other coefficients $O(N^{-1})$
is not ruled out by this argument.
Supplementary Material Figure~S5 reports the mean square errors  for the intercept and the coefficient of the first predictor estimated for different sample sizes, for the three estimators under study in the \Stoo\ setting. We report there the plot only for the first predictor because the mean square errors of the estimates of the seven regression coefficients were essentially equivalent.
All three estimators show a mean square error very
close to $O(N^{-1})$ for $\beta_1$.
Where we anticipated a mean square error no better than
$O(N^{-0.53})$ (imbalanced)
or $O(N^{-0.56})$ (balanced) 
for the intercept
we saw slightly better mean square errors with slopes
between $-0.57$ and $-0.61$, confirming our
expectation that the intercept would be harder
to estimate than the regression coefficients.

\subsection{Variance component estimation}

Here we present the estimation errors in
the variance component parameters $\ssa$
and $\ssb$.  The oracle method is given
the true values of these parameters and so
the comparison is only between all-row-column and
the Laplace method. For the variance parameter $\ssa$, the data only 
have $R$ levels $a_1,\dots,a_R$.  If they were observed
directly, then we could estimate $\ssa$ by 
$(1/R)\sum_{i=1}^R\ai^2$
and have a mean square error of $O(R^{-1})$. 
In practice, the $\ai$ are obscured by the presence
of the signal, the noise $\err_{ij}$ and the
other random effects $\bj$.  Accordingly, the best
rate we could expect for $\ssa$ is 
$O(R^{-1})=O(N^{-\rho})$
and the best we could expect for $\ssb$ is
$O(C^{-1})=O(N^{-\kappa})$.

Figure~\ref{fig:MSE-sigmas-setting211} shows
the mean square error for estimation of $\siga$ and $\sigb$
for all-row-column and Laplace methods in the
\Stoo\ setting.  Due to the imbalance, our
anticipated rates are $O(N^{-0.88})$
for $\siga$ and $O(N^{-0.53})$
for $\sigb$.  The all-row-column method does slightly
better than these rates.  The Laplace method
attains nearly this predicted rate for $\sigb$
but does much worse for $\siga$.
We can understand both of these discrepancies in terms
of biases, described next.
\begin{figure}
\centering
\includegraphics[width=.9\hsize]{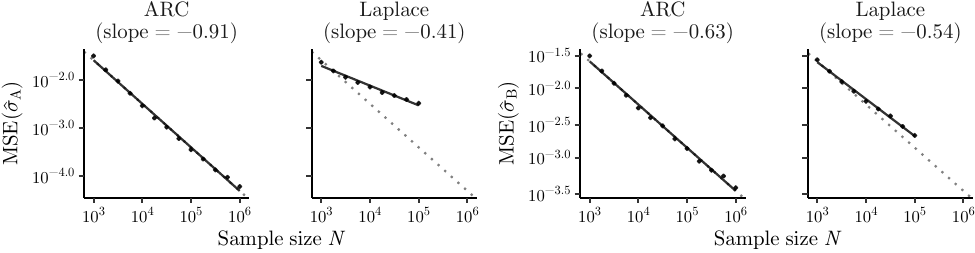}
    \caption{Mean square errors and estimated convergence rates for
    $\siga$ and $\sigb$ for all-row-column and Laplace methods
    in setting \Stoo. {The dotted line in the plot for the Laplace method is the convergence rate of all-row-column.}
In each plot there is a reference
    }
    \label{fig:MSE-sigmas-setting211}
\end{figure}

Figure \ref{fig:boxplots-setting211} 
shows boxplots for the parameter estimates of $\siga$ and $\sigb$ with
the all-row-column and Laplace methods.  We can compare
the center of those boxplots to the reference line
at the true parameter values and see that for the all-row-column estimates there is
a bias decreasing at a faster rate than the
width of the boxes.  This explains the slightly
better than predicted rates that we see for all-row-column. Instead the Laplace method has a substantial bias that only decreases very slowly
as $N$ increases, giving Laplace a worse than
expected rate, especially for $\siga$.

\begin{figure}[!h]
\centering
\includegraphics[width=.9\hsize]{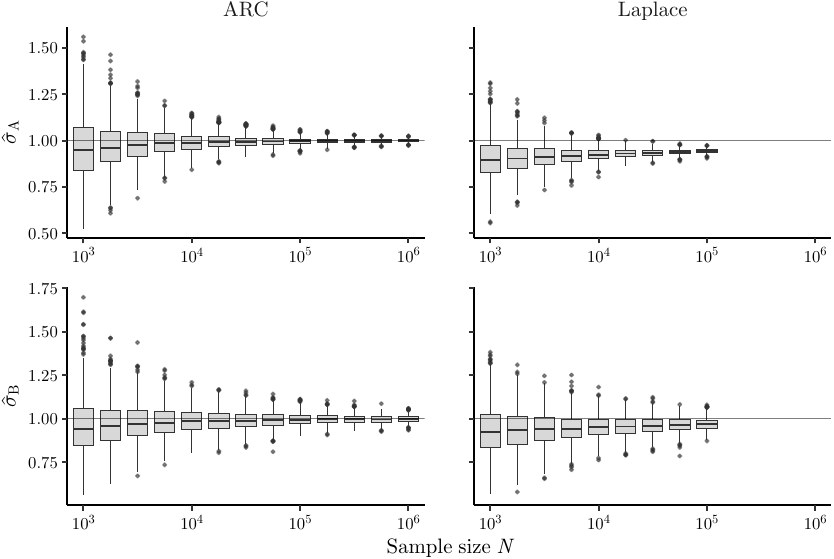}
\caption{Boxplots of $\siga$ and $\sigb$ estimates for the all-row-column and Laplace methods in the setting \Stoo.  There are horizontal reference lines at  the true parameter values.}
\label{fig:boxplots-setting211}
 \end{figure}

\subsection{Other settings}
The simulation results for all eight settings, are reported in full in the Supplementary Figures S1--S24.
We already discussed computational cost in eight settings based on Table~\ref{tab:timings}. 
Here we make brief accuracy comparisons. We compare our proposed all-row-column method to the oracle method which is infeasible because it requires knowledge of $\siga$ and $\sigb$ and
to the Laplace method which becomes infeasible for large $N$ because it does not scale as
$O(N)$.
In most settings and for most parameters, the oracle method was slightly more accurate than all-row-column.  That did not always hold. We see that all-row-column was slightly more accurate than the oracle method for $\beta_1$ in three of the eight settings, namely \Soto\ (Figure {S3}), \Sott\ (Figure {S4}) and \Stto\ (Figure {S7}).

The figures in the Supplementary material
show some cases where all-row-column has a slight advantage over the Laplace method and some
where it has a slight disadvantage.  There
are a small number of cases where the Laplace
method has outliers at $N=10^3$ that cause the
linear regression slope to be questionable.
These are present in the estimates of $\siga$ and $\sigb$ for setting \Stto\ (Figure {S23}).  However the attained mean square errors at $N=10^5$ do not differ much between all-row-column and Laplace
in that setting {(Figure S15)}.  Our conclusion is that compared to the Laplace method, all-row-column is scalable and robust.

The Supplementary Material \S S3 makes a comparison of all-row-column with the maximum pairwise likelihood estimator of \cite{bellio:2005}. 
Their pairwise likelihood involves all the pairs of correlated observations, that is those pairs that share the row- or
 the column-random effect. 
The pairwise likelihood is not computationally attractive.
The number of pairs of data points far exceeds the $O(N)$ constraint.
 In a setting with $R$ rows, the average row has $N/R$ elements in it.  If all
 the rows had that many elements then
 the number of pairs would be
 $\Omega\{R(N/R)^2\}=\Omega(N^2/R)$.
 Unequal numbers of observations per
row can only increase this count.
 As a result the cost must be
$\Omega\{\max(N^2/R,N^2/C)\}=\Omega\{ \max(N^{2-\rho},N^{2-\kappa})\}$, so the
 cost cannot be $O(N)$.

The comparison is made for the \Stoo\ setting that we have been focusing on.  For that imbalanced setting, the
cost of the pairwise likelihood is $\Omega(N^{2-0.53})=\Omega(N^{1.47}).$  We see in \S S3 that the empirical
 cost of the pairwise composite likelihood grows
 as $N^{1.46}$, close to the predicted
 rate.  The pairwise method attains very similar parameter estimation accuracy to all-row-column. The most important difference for this example is that all-row-column costs $O(N)$  while pairwise likelihood is far more
expensive and does not scale to large  data sets.

\section{Application to the Stitch Fix Data}\label{sect:application}

In this section we illustrate our all-row-column method
on a data set from Stitch Fix.  
As described in \cite{ghosh:2021}:
\begin{quote}
``Stitch Fix is an online personal styling service.
One of their business models involves sending
customers a sample of clothing items. The customer may
keep and purchase any of those items and
return the others.
They have provided us
with some of their client ratings data.
That data was anonymized, void of personally identifying
information, and as a sample it does not reflect their total
numbers of clients or items at the time they provided it.
It is also from 2015. While it does not describe their
current business, it is a valuable data set for illustrative purposes.''
\end{quote}

The Stitch Fix data consists of  $N=5{,}000{,}000$
ratings from $R=744{,}482$ clients on $C=3{,}547$ items. The data also includes client- and item-specific covariates.
In this data, the binary response $\yij$
of interest was whether customer $i$ thought that item $j$ was a top rated fit or not,
with $\yij=1$ for an answer of `yes'.
The predictor variables we used are listed
in Table~\ref {tab:variables}. There is one block of client predictors followed by a block of item predictors.
Some of the categorical variable levels in
the data had only a small number of
levels.  The table shows how we
have aggregated them.

\begin{table}
\centering
\begin{tabular}{lll}
{Variable} & {Description} &   {Levels} \\
&&\\
Client fit & client fit profile & {fitted} \\
&& {loose or oversize} \\
&& {straight or tight} \\
{Edgy} & edgy style? &  {yes / no} \\
{Boho} & Bohemian style? &  {yes / no} \\
{Chest} & chest size & numeric \\
{Size} & dress size & numeric \\
&&\\
{Material} & item primary material & {artificial fiber} \\
&& {leather or animal fiber}  \\
&& {regenerated fiber}  \\
&&  {vegetable fiber} \\
{Item fit} & fit of the clothing item &  {fitted} \\
&& {loose or oversized} \\ 
&&  {straight or tight} \\
\end{tabular}
\caption{Predictors available in the Stitch Fix data.}
\label{tab:variables}
\end{table}

Some of the observations with $\zij=1$ were nonetheless incomplete with a few missing entries.
Deleting them left us with $N=4{,}965{,}960$ 
ratings from $741{,}221$ clients on $3{,}523$ items.
The data are not dominated by a single row or column. 
The customer with the most records accounts for $N\epsilon_R$ records where $\epsilon_R\doteq1.25\times 10^{-5}$.  The item with the most records accounts for $N\epsilon_C$ of them with  $\epsilon_C\doteq2.77 \times 10^{-2}$.
The data are sparse because $N/(RC)\doteq 1.9 \times 10^{-3}.$

In a business setting one would fit and compare
a wide variety of different binary regression models
in order to understand the data.  Our purpose here
is to study large-scale probit models
with including crossed random effects
and so we choose just one model
for illustration, possibly the first model of many that one would
consider.
We consider the probit model with crossed random effects whose fixed effects are specified according to the symbolic model formula: 
\begin{center}
Top $\sim$ Client fit + Edgy + Boho + Chest + Size + Material + Item fit,
\end{center}
where Top is the binary response variable 
described earlier.
The model has $p = 12$ parameters for fixed effects including the intercept. The first level of each categorical predictor in alphabetical order (Table~\ref {tab:variables}) is used as the reference level in fitting the model.

Table~\ref{tab:stitch_results} reports 1) the maximum likelihood estimates of the regression parameters under a {na\"ive} probit model that ignores the customer and item heterogeneity, and 2) the all-row-column estimates for the probit model with two crossed random effects for the customers and items. 
The random effects probit parameter estimate $\hat\beta$ equals the na\"ive probit estimate $\hat\gamma$ 
{multiplied} by $(1+\hat\sigma_A^2+\hat\sigma_B^2)^{1/2}$. Using adaptive Gauss-Hermite quadrature we obtained estimates $\hat\sigma_A\doteq0.53$ and $\hat\sigma_B\doteq0.34$.
Following what we learned in the simulation studies, the estimates on all rows and columns were calculated with $k=5$ quadrature nodes. An earlier conservative calculation used $k=28$ but $k=5$ gave indistinguishable results.

The z values reported in the table are computed with the observed information for the {na\"ive} probit model and with the two-way cluster-robust sandwich estimator described in \S\ref{sect:standard_errors}. 
As expected, ignoring the customer and item heterogeneity leads to a large underestimation of the uncertainty in the parameter estimates and thus in the {na\"ive} probit all the predictors are strongly significant given the very large sample size. Conversely, the crossed random effects model takes into account the sources of heterogeneity and leads us to learn that the item fit is not a significant predictor of the top rank and that items made by vegetable fibers are less likely to be ranked top compared to clothes made by artificial fibers.  

\begin{table}[!h]
\centering
\begin{tabular}{llrrrrrr}
&&  \multicolumn{3}{c}{{Na\"ive probit}} & \multicolumn{3}{c}{{Random-effects probit}} \\
{Variable} && Est.${}^\text{\textdagger}$ &z value& p value &  Est.${}^\text{\textdagger}$ & z value & p value\\
  Intercept& & $\phm43.1$ &31.64& $<0.001$& $\phm 50.9$  &10.52 &$<0.001$\\
&&&&&&&\\
Client fit & loose or oversize  &$\phm8.7$ &61.04 &$<0.001$& 10.3 &10.81 &$<0.001$\\
 & straight or tight    & $\phm 5.1$  &34.30&$<0.001$& 6.0 & 10.67 &$<0.001$\\
Edgy  &  yes &$-$3.0 &$-$25.08& $<0.001$& $-$3.5  & $-$7.49 &$<0.001$\\
Boho  & yes & $\phm8.9$ &76.25&$<0.001$ & 10.5 & 25.81 &$<0.001$\\
Chest  & & $-$0.5  &$-$12.30 &$<0.001$ & $-$0.6   & $-$7.32 &$<0.001$\\
Size  & & $\phm0.2$  &10.94 &$<0.001$  & 0.3  & 2.99 & $0.003 $\\
&&&&&&&\\
Material & leather or animal & $-$12.9 &$-$12.99 &$<0.001$ & $-$15.2 & $-$1.57 &${0.116}$ \\
 & regenerated  & $\phm 2.5$ &20.06 &$<0.001$ & $\phm 3.0$ &0.65 & ${0.516}$ \\
  &  vegetable & $-$12.2 &$-$58.39 &$<0.001$ & $-$14.5 & $-$3.13 &$0.002$ \\
Item fit &  loose or oversized & $\phm$9.7 &36.15 &$<0.001$ & $\phm$ 11.4  & 1.78 & {0.075}\\
 & straight or tight & $-2.1$  &$-$9.55 &$<0.001$ & $-$2.5 &$-$0.67 & {0.500} \\
\end{tabular}
\caption{Stitch Fix Binary Regression Results.
${}^\text{\textdagger}$All the estimated predictor parameters are multiplied by 100. 
See Table~\ref{tab:variables} for a description of the predictors. The first level of each categorial predictor in alphabetic order is used as reference level.
}
\label{tab:stitch_results} 
\end{table}

Supplementary Material Figure~S29 compares the two-way cluster robust sandwich standard errors with 1) the standard errors from the {na\"ive} probit fit multiplied by $(1+\hat{\sigma}^2_A+\hat{\sigma}^2_B)^{1/2}\doteq 1.18$ 
to report them in the fixed effects scale of the probit model with crossed random effects
and 2) the pigeonhole nonparametric bootstrap standard errors of \cite{owen:2007} that do not assume correct model specification, as mentioned in \S\ref{sect:standard_errors}. The Figure S29 shows how closely the sandwich and pigeonhole standard errors agree.
The na\"ive standard errors, which ignore dependence between items and customers,  correspond to variances
underestimated by factors ranging from
3 to 954, depending on the
parameter and only slightly on whether
we use sandwich or pigeonhole 
estimates of the coefficient variances.  Thus ignoring the dependencies from correlated data makes an enormous difference here.
These standard errors are reported in 
Table S1 of the Supplementary Material. 
We also computed parametric bootstrap standard errors
(not shown). 
Those were somewhat lower than the nonparametric standard errors as expected.

In an application like that of Stitch Fix data, the typical goal is to make inference about the probability of ranking an item top for a specific customer and a specific item. Such evaluations also require estimating the customer ($a_i$) and item ($b_j$) random effects. The estimates of those random effects are a byproduct of the adaptive Gauss-Hermite quadrature used to approximate the row- and column-wise likelihoods. Figure S30 in the Supplement shows the distribution of the estimated customer and item random effects.

\section{Discussion}\label{sect:conclusions}

In Theorem 2 we proved that there exists a root of the row (or column) likelihood equation that is a consistent estimator of $\tau^2_A$ (or $\tau^2_B).$ This form of consistency is commonly called Cramer consistency. It is the same notion of consistency that \cite{jiang:2013} established for the maximum likelihood estimate. If one does not find Cramer consistency sufficient, then it is possible to construct estimators $\hat\tau^2_A$ and $\hat\tau^2_B$ that converge in probability to $\tau^2_A$ and $\tau^2_B$ as $N\to\infty$. A consistent estimator of $\tau^2_A$ can be obtained from one or more rows $i$ for which  $\nid\to\infty$ and a similar approach works for $\tau^2_B$.
In \S {S1.3} of the Supplementary Material we show how to construct such a consistent estimator  and we  give conditions under which the number of large rows will diverge to infinity as $N\to\infty$. See Theorem 3 there. We prefer our all-row-column estimator to an approach using just large rows, because it would be awkward to have to decide in practice which rows to use, and because we believe that there is valuable information in the other, smaller, rows.

We have used a probit model instead of a logistic one because a Gaussian latent variable is a very natural counterpart to the Gaussian random effects that are the default in random effects models. That connection simplified our modeling and computation. %
\cite{gibb:hede:1997} remark that 
``As in
the case of fixed effect models, selection of probit versus logistic response functions appears to have
more to do with custom or practice within a particular discipline than differences in statistical
properties.'' An extension to logistic regression is outside the scope of this paper.

We are confident that our approach of combining multiple misspecified models will extend to other settings with Gaussian random effects and latent variables.  The code we use already handles ordinal regression.  Extensions to more than two effects or multivariate responses may well work similarly, but are outside the scope of this article.

We conclude with some additional references about recent work on inference for data with a crossed design.
\cite{Goplerud:2023} developed a variational approximation for scalable Bayesian estimation using an appropriate relaxing of the mean-field assumption to avoid underestimation of posterior uncertainty in high dimensions. 

\cite{Xu:2023} combine variational approximations and composite likelihoods that consider row-column decomposition in a similar way to ours. This approach is particularly convenient for Poisson and gamma regression models because in this case analytical calculations allow the approximation of one-dimensional integrals that appear in the composite likelihood to be avoided. 
In the binary case that we considered in our article, the approach of \cite{Xu:2023} requires numerical integration and its consistency has not yet been established. 

\cite{hall:etal:2020} consider message passing algorithms for generalized linear mixed models and their \S 6 includes a crossed effects  binary regression.  However it has $R=10$ and $C=6$ in our notation with $3$ replicates at each $(i,j)$ pair, and it did not address scalability. \cite{Ruli:2016} proposed an improved version of the Laplace approximation to overcome the potential failure of the usual Laplace approximation and also illustrates it in the case of generalized linear models with crossed random effects in their \S 3.5. However, this proposal is not scalable, as illustrated by the numerical results included in
\cite{Ruli:2016}.

\cite{bartolucci2017composite} consider another composite likelihood that combines a row likelihood with a column likelihood to estimate a hidden Markov model for two-way data arrays. This approach shares the philosophy of our proposal but differs from our work in terms of model (latent discrete Markov variables vs crossed random effects), fitting procedure (EM algorithm vs direct maximization), data structure (balanced vs unbalanced and sparse) and motivating  application (genomics vs e-commerce).

\section*{Supplementary Material}

The supplementary material contains proofs of the theorems of \S
\ref{sect:ARC}, additional simulation results discussed in \S
\ref{sect:simu}, further references to the literature, as well as a
table and two plots about the Stitch Fix application mentioned in \S
\ref{sect:application}. The supplement is at\\ 
\url{https://artowen.su.domains/reports/ARC-final-supplement.pdf}.\\
\texttt{R} code for replicating our results is available in the public repository\\ 
\url{https://github.com/rugbel/arcProbit}.

\section*{Acknowledgements}

We are grateful to Silvia Bianconcini, Blair Bilodeau,  Nancy Reid, Alex Stringer, Yanbo Tang and Matt Wand for helpful discussion about generalized linear mixed models, Laplace approximation and adaptive Gauss-Hermite quadrature.
We thank Bradley Klingenberg and Stitch Fix for making this data set available. This work was supported in part by grants IIS-1837931 and DMS-2152780 from the U.S. National Science Foundation, IRIDE from Ca' Foscari University of Venice, and by the research project \emph{Latent Variable Models for Complex Data} founded by the European Union  - NextGenerationEU (MUR DM funds 737/2021).

\bibliographystyle{biometrika}
\bibliography{Bellio-et-al}

\begin{thebibliography}{39}
\expandafter\ifx\csname natexlab\endcsname\relax\def\natexlab#1{#1}\fi

\bibitem[{Agresti(2002)}]{agresti:2002}
\textsc{Agresti, A.} (2002).
\newblock \textit{Categorical Data Analysis}.
\newblock New York: Wiley.

\bibitem[{Bartolucci et~al.(2017)Bartolucci, Chiaromonte, Don \&
  Lindsay}]{bartolucci2017composite}
\textsc{Bartolucci, F.}, \textsc{Chiaromonte, F.}, \textsc{Don, P.~K.} \&
  \textsc{Lindsay, B.~G.} (2017).
\newblock Composite likelihood inference in a discrete latent variable model
  for two-way “clustering-by-segmentation” problems.
\newblock \textit{Journal of Computational and Graphical Statistics}
  \textbf{26}, 388--402.

\bibitem[{Bates et~al.(2015)Bates, M{\"a}chler, Bolker \& Walker}]{Bates:2015}
\textsc{Bates, D.}, \textsc{M{\"a}chler, M.}, \textsc{Bolker, B.} \&
  \textsc{Walker, S.} (2015).
\newblock Fitting linear mixed-effects models using {lme4}.
\newblock \textit{Journal of Statistical Software} \textbf{67}, 1--48.

\bibitem[{Bellio \& Varin(2005)}]{bellio:2005}
\textsc{Bellio, R.} \& \textsc{Varin, C.} (2005).
\newblock A pairwise likelihood approach to generalized linear models with
  crossed random effects.
\newblock \textit{Statistical Modelling} \textbf{5}, 217--227.

\bibitem[{Bennett \& Lanning(2007)}]{bennett2007netflix}
\textsc{Bennett, J.} \& \textsc{Lanning, S.} (2007).
\newblock The {Netflix} prize.
\newblock In \textit{Proceedings of KDD cup and workshop}, vol. 2007. New York.

\bibitem[{Bilodeau et~al.(2024)Bilodeau, Stringer \& Tang}]{Bilodeau:2024}
\textsc{Bilodeau, B.}, \textsc{Stringer, A.} \& \textsc{Tang, Y.} (2024).
\newblock Stochastic convergence rates and applications of adaptive quadrature
  in bayesian inference.
\newblock \textit{Journal of the American Statistical Association}
  \textbf{119}, 690--700.

\bibitem[{Breslow \& Clayton(1993)}]{bres:clay:1993}
\textsc{Breslow, N.~E.} \& \textsc{Clayton, D.~G.} (1993).
\newblock Approximate inference in generalized linear mixed models.
\newblock \textit{Journal of the American statistical Association} \textbf{88},
  9--25.

\bibitem[{Cameron et~al.(2011)Cameron, Gelbach \& Miller}]{cameron:2011}
\textsc{Cameron, C.~A.}, \textsc{Gelbach, J.~B.} \& \textsc{Miller, D.~L.}
  (2011).
\newblock Robust inference with multiway clustering.
\newblock \textit{Journal of Business \& Economic Statistics} \textbf{29},
  238--249.

\bibitem[{Eddelbuettel(2013)}]{Rcpp}
\textsc{Eddelbuettel, D.} (2013).
\newblock \textit{Seamless {R} and {C++} Integration with {Rcpp}}.
\newblock New York: Springer.

\bibitem[{Edelman \& Rao(2005)}]{edel:rao:2005}
\textsc{Edelman, A.} \& \textsc{Rao, N.~R.} (2005).
\newblock Random matrix theory.
\newblock \textit{Acta Numerica} \textbf{14}, 233--297.

\bibitem[{Gao \& Owen(2017)}]{gao:2017}
\textsc{Gao, K.} \& \textsc{Owen, A.} (2017).
\newblock Efficient moment calculations for variance components in large
  unbalanced crossed random effects models.
\newblock \textit{Electronic Journal of Statistics} \textbf{11}, 1235--1296.

\bibitem[{Gao \& Owen(2020)}]{gao:owen:2020}
\textsc{Gao, K.} \& \textsc{Owen, A.~B.} (2020).
\newblock Estimation and inference for very large linear mixed effects models.
\newblock \textit{Statistica Sinica} \textbf{30}, 1741--1771.

\bibitem[{Ghandwani et~al.(2023)Ghandwani, Ghosh, Hastie \&
  Owen}]{ghandwani:2023}
\textsc{Ghandwani, D.}, \textsc{Ghosh, S.}, \textsc{Hastie, T.} \&
  \textsc{Owen, A.~B.} (2023).
\newblock Scalable solution to crossed random effects model with random slopes.
\newblock Tech. rep., arXiv:2307.12378.

\bibitem[{Ghosh et~al.(2022{\natexlab{a}})Ghosh, Hastie \& Owen}]{ghosh:2020}
\textsc{Ghosh, S.}, \textsc{Hastie, T.} \& \textsc{Owen, A.~B.}
  (2022{\natexlab{a}}).
\newblock Backfitting for large scale crossed random effects regressions.
\newblock \textit{The Annals of Statistics} \textbf{50}, 560--583.

\bibitem[{Ghosh et~al.(2022{\natexlab{b}})Ghosh, Hastie \& Owen}]{ghosh:2021}
\textsc{Ghosh, S.}, \textsc{Hastie, T.} \& \textsc{Owen, A.~B.}
  (2022{\natexlab{b}}).
\newblock Scalable logistic regression with crossed random effects.
\newblock \textit{Electronic Journal of Statistics} \textbf{16}, 4604--4635.

\bibitem[{Ghosh \& Zhong(2021)}]{ghoshzhong:2021}
\textsc{Ghosh, S.} \& \textsc{Zhong, C.} (2021).
\newblock Convergence rate of a collapsed {G}ibbs sampler for crossed random
  effects models.
\newblock Tech. rep., arXiv:2109.02849.

\bibitem[{Gibbons \& Hedeker(1997)}]{gibb:hede:1997}
\textsc{Gibbons, R.~D.} \& \textsc{Hedeker, D.} (1997).
\newblock Random effects probit and logistic regression models for three-level
  data.
\newblock \textit{Biometrics} \textbf{53}, 1527--1537.

\bibitem[{Goplerud et~al.(2025)Goplerud, Papaspiliopoulos \&
  Zanella}]{Goplerud:2023}
\textsc{Goplerud, M.}, \textsc{Papaspiliopoulos, O.} \& \textsc{Zanella, G.}
  (2025).
\newblock Partially factorized variational inference for high-dimensional mixed
  models.
\newblock \textit{Biometrika.} \textbf{\normalfont{Forthcoming}}.

\bibitem[{Hall et~al.(2020)Hall, Johnstone, Ormerod, Wand \&
  Yu}]{hall:etal:2020}
\textsc{Hall, P.}, \textsc{Johnstone, I.~M.}, \textsc{Ormerod, J.~T.},
  \textsc{Wand, M.~P.} \& \textsc{Yu, J. C.~F.} (2020).
\newblock Fast and accurate binary response mixed model analysis via
  expectation propagation.
\newblock \textit{Journal of the American Statistical Association}
  \textbf{115}, 1902--1916.

\bibitem[{Jiang(2013)}]{jiang:2013}
\textsc{Jiang, J.} (2013).
\newblock The subset argument and consistency of {M}{L}{E} in {G}{L}{M}{M}:
  Answer to an open problem and beyond.
\newblock \textit{The Annals of Statistics} \textbf{41}, 177--195.

\bibitem[{Jiang et~al.(2022)Jiang, Wand \& Bhaskaran}]{jiang2022usable}
\textsc{Jiang, J.}, \textsc{Wand, M.~P.} \& \textsc{Bhaskaran, A.} (2022).
\newblock Usable and precise asymptotics for generalized linear mixed model
  analysis and design.
\newblock \textit{Journal of the Royal Statistical Society Series B:
  Statistical Methodology} \textbf{84}, 55--82.

\bibitem[{Kristensen et~al.(2016)Kristensen, Nielsen, Berg, Skaug \&
  Bell}]{kristensen:2016}
\textsc{Kristensen, K.}, \textsc{Nielsen, A.}, \textsc{Berg, C.~W.},
  \textsc{Skaug, H.} \& \textsc{Bell, B.~M.} (2016).
\newblock {T}{M}{B}: Automatic differentiation and {L}aplace approximation.
\newblock \textit{Journal of Statistical Software} \textbf{70}, 1–21.

\bibitem[{Lindsay(1988)}]{lindsay:1988}
\textsc{Lindsay, B.~G.} (1988).
\newblock Composite likelihood methods.
\newblock \textit{Contemporary Mathematics} \textbf{80}, 221--239.

\bibitem[{Liu \& Pierce(1994)}]{liu1994note}
\textsc{Liu, Q.} \& \textsc{Pierce, D.~A.} (1994).
\newblock A note on {G}auss-{H}ermite quadrature.
\newblock \textit{Biometrika} \textbf{81}, 624--629.

\bibitem[{Lumley \& Mayer~Hamblett(2003)}]{luml:hamb:2003}
\textsc{Lumley, T.} \& \textsc{Mayer~Hamblett, N.} (2003).
\newblock Asymptotics for marginal generalized linear models with sparse
  correlations.
\newblock Tech. rep., University of Washington.

\bibitem[{Lyu et~al.(2024)Lyu, Sission \& Welsh}]{lyu:siss:wels:2024}
\textsc{Lyu, Z.}, \textsc{Sission, S.~A.} \& \textsc{Welsh, A.~H.} (2024).
\newblock Increasing dimension asymptotics for two-way crossed mixed effect
  models.
\newblock Tech. rep., arXiv:2401.06446.

\bibitem[{McCullagh \& Nelder(1989)}]{McCullagh:Nelder:1989}
\textsc{McCullagh, P.} \& \textsc{Nelder, J.} (1989).
\newblock \textit{Generalized Linear Models}.
\newblock Chapman \& Hall/CRC.

\bibitem[{Miglioretti \& Heagerty(2004)}]{miglioretti:2004}
\textsc{Miglioretti, D.~L.} \& \textsc{Heagerty, P.~J.} (2004).
\newblock Marginal modeling of multilevel binary data with time-varying
  covariates.
\newblock \textit{Biostatistics} \textbf{5}, 381--398.

\bibitem[{Ogden(2021)}]{ogden2021error}
\textsc{Ogden, H.} (2021).
\newblock On the error in {L}aplace approximations of high-dimensional
  integrals.
\newblock \textit{Stat} \textbf{10}, e380.

\bibitem[{Owen(2007)}]{owen:2007}
\textsc{Owen, A.~B.} (2007).
\newblock The pigeonhole bootstrap.
\newblock \textit{The Annals of Applied Statistics} \textbf{1}, 386--411.

\bibitem[{Papaspiliopoulos et~al.(2020)Papaspiliopoulos, Roberts \&
  Zanella}]{papaspiliopoulos:2020}
\textsc{Papaspiliopoulos, O.}, \textsc{Roberts, G.~O.} \& \textsc{Zanella, G.}
  (2020).
\newblock Scalable inference for crossed random effects models.
\newblock \textit{Biometrika} \textbf{107}, 25--40.

\bibitem[{Papaspiliopoulos et~al.(2023)Papaspiliopoulos, Stumpf-F{\'e}tizon \&
  Zanella}]{Papaspiliopoulos:2023}
\textsc{Papaspiliopoulos, O.}, \textsc{Stumpf-F{\'e}tizon, T.} \&
  \textsc{Zanella, G.} (2023).
\newblock Scalable {B}ayesian computation for crossed and nested hierarchical
  models.
\newblock \textit{Electronic Journal of Statistics} \textbf{17}, 3575--3612.

\bibitem[{Ruli et~al.(2016)Ruli, Sartori \& Ventura}]{Ruli:2016}
\textsc{Ruli, E.}, \textsc{Sartori, N.} \& \textsc{Ventura, L.} (2016).
\newblock Improved laplace approximation for marginal likelihoods.
\newblock \textit{Electronic Journal of Statistics} \textbf{10}, 3986--4009.

\bibitem[{Schall(1991)}]{schall:1991}
\textsc{Schall, R.} (1991).
\newblock Estimation in generalized linear models with random effects.
\newblock \textit{Biometrika} \textbf{78}, 719--727.

\bibitem[{Shun \& McCullagh(1995)}]{shun:mccu:1995}
\textsc{Shun, Z.} \& \textsc{McCullagh, P.} (1995).
\newblock Laplace approximation of high dimensional integrals.
\newblock \textit{Journal of the Royal Statistical Society Series B}
  \textbf{57}, 749--760.

\bibitem[{Stiratelli et~al.(1984)Stiratelli, Laird \&
  Ware}]{stir:lair:ware:1984}
\textsc{Stiratelli, R.}, \textsc{Laird, N.} \& \textsc{Ware, J.~H.} (1984).
\newblock Random-effects models for serial observations with binary response.
\newblock \textit{Biometrics} , 961--971.

\bibitem[{Tang \& Reid(2024)}]{Tang:2024}
\textsc{Tang, Y.} \& \textsc{Reid, N.} (2024).
\newblock Laplace and saddlepoint approximations in high dimensions.
\newblock \textit{Bernoulli} \textbf{to appear}, 1--31.

\bibitem[{Varin et~al.(2011)Varin, Reid \& Firth}]{varin:2011}
\textsc{Varin, C.}, \textsc{Reid, N.} \& \textsc{Firth, D.} (2011).
\newblock An overview of composite likelihood methods.
\newblock \textit{Statistica Sinica} \textbf{21}, 5--42.

\bibitem[{Xu et~al.(2023)Xu, Reid \& Kong}]{Xu:2023}
\textsc{Xu, L.}, \textsc{Reid, N.} \& \textsc{Kong, D.} (2023).
\newblock Gaussian variational approximation with composite likelihood for
  crossed random effect models.
\newblock Tech. rep., arXiv:2310.12485.

\end{thebibliography}

\end{document}